\newtheorem{theorem}{Theorem}[section]
\newtheorem{lemma}[theorem]{Lemma}
\title{$\proc{ExpoSort}$: Breaking the quasi-polynomial-time barrier for reluctant sorting}
\author{Mikkel Abrahamsen\thanks{Department of Computer Science, University of Copenhagen, \texttt{miab@di.ku.dk}.
The author is supported by Starting Grant 1054-00032B from the Independent Research Fund Denmark under the Sapere Aude research career programme and is part of Basic Algorithms Research Copenhagen (BARC), supported by the VILLUM Foundation grant 16582.}}
\date{September 2024}
\begin{document}

\maketitle

\begin{abstract}
We introduce the algorithm $\proc{ExpoSort}$, a groundbreaking method that sorts an array of 
$n$ numbers in a spectacularly inefficient $\Theta(2^n)$ time.
$\proc{ExpoSort}$ proudly claims the title of the first reluctant algorithm to decisively surpass the quasi-polynomial running time $\Omega(n^{\log n/(2+\varepsilon)})$ of the notoriously sluggish $\proc{SlowSort}$ algorithm by Broder and Stolfi [ACM SIGACT News, 1984].
In the ongoing quest for the slowest possible sort, $\proc{ExpoSort}$ redefines what it means to take one's time.

Remarkably, $\proc{ExpoSort}$ achieves this feat with one of the simplest pseudocodes among all known sorting algorithms.
However, a slight modification---merely moving one recursive call inside an if statement---transforms $\proc{ExpoSort}$ into an astonishingly well-camouflaged variant of the classic $\proc{InsertionSort}$ with best- and worst-case running times of $\Theta(n)$ and $\Theta(n^3)$, respectively.
This dual nature of $\proc{ExpoSort}$ serves as a reminder of the utmost care required when crafting pessimal algorithms, where a slight lapse in judgment could result in accidentally producing an embarrassingly practical algorithm.
\end{abstract}

\section{The algorithm}

Sorting algorithms are a cornerstone of computer science, and they often serve as a litmus test for understanding algorithmic efficiency.
Among the many algorithms that have been studied and refined, we have seen a range of performance characteristics, from the elegantly efficient to the delightfully cumbersome.
In the latter category, the quest for the slowest conceivable sorting algorithm has intrigued both amateur and professional researchers alike.
It is with this spirit of exploration and jest that we introduce a new contender in the realm of inefficient sorting algorithms:

\begin{codebox}
  \Procname{$\proc{ExpoSort}(A, n)$}
  \li \If $n>1$ \Do
  \li $\proc{ExpoSort}(A,n-1)$
  \li \If $A[n-1]>A[n]$ \Do
  \li swap $A[n-1]$ and $A[n]$ \End
  \li $\proc{ExpoSort}(A,n-1)$
  \End
\end{codebox}

It is easily seen by induction on $n$ that the algorithm sorts an array of 
$n$ numbers.
Furthermore, the running time $T(n)$ clearly satisfies the recurrence $T(n)=2\, T(n-1)+\Theta(1)$.
This solves to 
$T(n)=\Theta(2^n)$, so it is fair to say that $\proc{ExpoSort}$ is a sorting algorithm that embraces inefficiency with enthusiasm.
Its design is a testament to the principle that, in the world of algorithms, sometimes the simplest approach can yield the most excruciatingly inefficient results.
Indeed, we don't know of any sorting algorithm that can be written with a simpler and shorter pseudo-code.

\section{Related work}

The quest for the slowest sorting algorithm is a storied one, with notable contributions from numerous inventive minds.
The first contender might have been an unnamed algorithm, whose worst-case running time of 
$\Theta(n^3)$ was, quite intentionally, designed to be amusingly sluggish.
The algorithm is described in a paper by Bentley~\cite{10.1145/358027.381121}, who attributed it to Steele, Finkel, Woods, Crispin, Stallman, and Goodfellow~\cite{hacker}.
Another algorithm admired for its calm nature is $\proc{StoogeSort}$, with its peculiar running time $\Theta(n^{\log_{3/2} 3})=\Theta(n^{2.7095\ldots})$.
The origin of $\proc{StoogeSort}$ seems unclear, but it is for instance described in~\cite[Problem 7-4]{DBLP:books/daglib/0023376}.

In their seminal 1984 paper, Broder and Stolfi~\cite{DBLP:journals/sigact/BroderS84} formalized the concept of \emph{reluctant algorithms}, a breed of algorithms distinguished not by their efficiency but by their remarkable ability to delay the inevitable.
They defined a reluctant algorithm as one that ``wastes time in a way that is sufficiently contrived to fool a naive observer.''
According to their criteria, simply adding extraneous loops to achieve a galactic running time was considered too blatant: ``any fool can see that the algorithm is just wasting time.''

Instead, they advocated for algorithms that, despite their profound aversion to rapid completion, still make steady progress towards their goal.
The $\proc{SlowSort}$ algorithm, with its quasi-polynomial time complexity between $\Omega(n^{\log n/(2+\varepsilon)})$ for any $\varepsilon>0$ and $O(n^{\log n/2})$, was one of their prime examples of this reluctant approach.
$\proc{SlowSort}$ follows the so-called \emph{multiply-and-surrender} paradigm introduced in the same paper.
The core idea of the paradigm is to break down the original problem into two or more subproblems, each only slightly simpler than the original.
This process is recursively repeated, generating more and more subproblems.
In the end we have an astonishing number of subproblems that are so trivial that there is no choice but to report their solutions, at which point we must finally surrender.

Our new algorithm $\proc{ExpoSort}$ demonstrates that this paradigm can get us even further.
The algorithm not only aims to outshine $\proc{SlowSort}$ in its quest for slowness but also adheres to the philosophy behind reluctant algorithms by maintaining a steady yet painfully slow progress.
Unlike the notorious $\proc{BogoSort}$~\cite{DBLP:conf/fun/GruberHR07}, which exacerbates its inefficiency by repeatedly permuting the array randomly until sorted---resulting in an expected running time of 
$\Theta(n\cdot n!)$---$\proc{ExpoSort}$ advances with a measured pace, never increasing the number of inversions in the array.

\section{The devil in the details}

Interestingly, the following slight modification of $\proc{ExpoSort}$ drastically changes its properties:

\begin{codebox}
  \Procname{$\proc{CubeSort}(A, n)$}
  \li \If $n>1$ \Do
  \li $\proc{CubeSort}(A,n-1)$
  \li \If $A[n-1]>A[n]$ \Do
  \li swap $A[n-1]$ and $A[n]$  \li $\proc{CubeSort}(A,n-1)$
  \End
  \End
\end{codebox}

As the following lemma states, simply increasing the indentation of the recursive call at line 5 of $\proc{ExpoSort}$ deteriorates the running time exponentially, resulting in the algorithm $\proc{CubeSort}$ which is dangerously close to being practical.
When the input array is sorted, $\proc{CubeSort}$ is even asymptotically optimal!
We hope this example serves as both a warning and a source of inspiration for those who delve into the delicate art of constructing pessimal algorithms.

\begin{lemma}
The best-case running time of $\proc{CubeSort}$ is $\Theta(n)$, and the worst-case is $\Theta(n^3)$.
\end{lemma}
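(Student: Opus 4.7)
The analysis uses the same structural observation that established correctness of $\proc{ExpoSort}$: after the recursive call on line~2, the prefix $A[1..n-1]$ is sorted. The novelty in $\proc{CubeSort}$ is that the second recursive call is guarded, so it fires only in levels where $A[n]$ is strictly less than the maximum of $A[1..n-1]$ at that point.

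For the best case I would argue by induction that on an already-sorted input the condition on line~3 fails at every level, so the recurrence collapses to $S(n)=S(n-1)+\Theta(1)$, giving $\Theta(n)$; the matching $\Omega(n)$ lower bound is simply the recursion depth.

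For the worst case, I would introduce the auxiliary quantity $I(n,d)$: the cost of $\proc{CubeSort}(A,n)$ when $A[1..n-1]$ is sorted and exactly $d$ entries of $A[1..n-1]$ exceed $A[n]$. The first recursive call runs on a sorted prefix of length $n-1$ in best-case time $\Theta(n)$; if $d\geq 1$ the swap fires, and the second call sees $A[1..n-2]$ still sorted with exactly $d-1$ entries exceeding the new $A[n-1]$. This yields $I(n,d)=\Theta(n)+I(n-1,d-1)+\Theta(1)$ with $I(n,0)=\Theta(n)$, so unfolding gives $I(n,d)=\Theta(nd)$. For an arbitrary input, letting $d_k$ be the number of inversions in the original $A$ with right endpoint $k$, the top-level analysis gives $T(n)=T(n-1)+O(1)+[d_n>0]\cdot I(n-1,d_n-1)=T(n-1)+O(n\cdot d_n)$. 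Telescoping yields $T(n)=O\bigl(n\sum_k d_k\bigr)=O(n\cdot \mathrm{inv}(A))=O(n^3)$, and the reverse-sorted input saturates $d_k=k-1$ at every level, realizing the matching $\Omega(n^3)$ bound.

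The main obstacle is verifying that after the swap the second recursive call lands back in the ``$I$-regime'' with displacement parameter reduced by exactly one. Concretely, one must check that the swap preserves sortedness of $A[1..n-2]$ and that the count of entries exceeding the new $A[n-1]$ is precisely $d-1$---the old $A[n-1]$, which was the largest such entry, is removed from the prefix under consideration. Once this invariant is in hand, the $I(n,d)=\Theta(nd)$ recurrence and the telescoping to $T(n)=O(n^3)$ are routine.
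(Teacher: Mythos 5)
Your proposal is correct, and its skeleton matches the paper's: both rest on the observation that the guarded second call always receives an array whose prefix $A[1..n-2]$ is already sorted, so the analysis splits into a "sorted" tier and a "sorted prefix with a displaced last element" tier. The best-case argument is identical. For the worst case, the paper analyzes only two specific instances of your $I(n,d)$ regime: its $T_2(n)$ is exactly your $I(n,n-1)$ (sorted prefix, minimum at the end), and its $T_3(n)$ is the reverse-sorted input, which feeds into $T_2$ after one swap; the claim that the reverse-sorted input dominates every other input is then dispatched with a bare "clearly." Your parameterization by the displacement $d$, with the invariant that the swap preserves sortedness of $A[1..n-2]$ and decrements $d$ by exactly one (which does hold: the removed element $A[n-1]$ was the largest of the $d$ offenders), is a genuine refinement. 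It yields the sharper bound $T(n)=O\bigl(n+n\cdot\mathrm{inv}(A)\bigr)$ and turns the paper's hand-waved arbitrary-input upper bound into an honest telescoping over the per-index inversion counts $d_k$. The only cosmetic slips are that $I(n,d)$ unfolds to $\Theta\bigl(n(d+1)\bigr)$ rather than $\Theta(nd)$ (harmless, since you treat $d=0$ separately) and that the $O(1)$ term should be kept alongside $O(n\,d_n)$ when $d_n=0$; neither affects the conclusion. In short: same decomposition, but your version proves more and assumes less.
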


\begin{proof}
Let $T_1(n)$ be the running time when $A$ is already sorted.
The recursive call at line 2 takes $T_1(n-1)$ time and there is no call at line 5, so we get the recurrence $T_1(n)=T_1(n-1)+\Theta(1)$, solving to $T_1(n)=\Theta(n)$.
This is also the best-case running time.

Let $T_2(n)$ be the running times when $A[1:n-1]$ is sorted but the minimum element is $A[n]$.
In this case, the first recursive call at line 2 takes $T_1(n-1)$ time, then $A[n-1]$ is swapped with $A[n]$, and then the recursive call at line 5 takes $T_2(n-1)$ time.
We thus have $T_2(n)=T_1(n-1)+T_2(n-1)+\Theta(1)=T_2(n-1)+\Theta(n)$, which solves to $T_2(n)=\Theta(n^2)$.

Finally, let $T_3(n)$ be the running time when $A$ is reversely sorted.
Here the recursive call at line 2 takes $T_3(n-1)$ time, after which $A[1:n-1]$ is sorted, and then $A[n-1]$ is swapped with $A[n]$, so the recursive call at line 5 takes $T_2(n-1)$ time.
We thus have $T_3(n)=T_3(n-1)+T_2(n-1)+\Theta(1)=T_3(n-1)+\Theta(n^2)$, which solves to $T_3(n)=\Theta(n^3)$.
Clearly, this is also an upper bound for an arbitrary input array.
\end{proof}

We will now see that $\proc{ExpoSort}$ and $\proc{CubeSort}$ can both be seen as leisurely versions of the classical (and hectic!) $\proc{InsertionSort}$.
Let us first refresh how $\proc{InsertionSort}$ works.

\begin{codebox}
\Procname{$\proc{InsertionSort}(A, n)$}
  \li \For $j = 2$ \To $n$ \Do
  \li $i = j$
  \li \While $i > 1$ \textbf{and} $A[i-1] > A[i]$ \Do
  \li swap $A[i-1]$ and $A[i]$
  \li $i = i - 1$
  \End
  \End
\end{codebox}

We can now state the similarity between the algorithms.

\begin{lemma}
$\proc{ExpoSort}$ and $\proc{CubeSort}$ both make the same swap sequence as $\proc{InsertionSort}$.
\end{lemma}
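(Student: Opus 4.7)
The plan is to prove the lemma by induction on $n$, with a strengthened hypothesis: for both $\proc{ExpoSort}$ and $\proc{CubeSort}$ on an input of size $n$, (a) the sequence of swaps equals the swap sequence produced by $\proc{InsertionSort}(A,n)$, (b) the call leaves $A[1:n]$ sorted, and (c) if $A[1:n]$ is already sorted on entry, no swaps occur. Property (c) is the decisive addition: the only structural difference between the two algorithms is whether the second recursive call is executed when the comparison at line~3 fails, and (c) will guarantee that this extra call is pure wasted work and contributes no swaps.

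Next I would dissect $\proc{InsertionSort}(A,n)$. It is equivalent to $\proc{InsertionSort}(A,n-1)$ followed by the inner-loop iteration for $j=n$. Once $A[1:n-1]$ is sorted, this inner loop produces a (possibly empty) contiguous sequence of swaps at positions $(n{-}1,n),(n{-}2,n{-}1),\dots,(k,k{+}1)$, stopping when either $k=1$ or $A[k{-}1]\le A[k]$. Matching this pattern is what both recursive algorithms must do.

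For $\proc{CubeSort}(A,n)$, the first recursive call matches $\proc{InsertionSort}(A,n-1)$ by the induction hypothesis and leaves $A[1:n-1]$ sorted. If $A[n-1]\le A[n]$, neither algorithm swaps further. If $A[n-1]>A[n]$, the swap at line~4 reproduces the first swap of the inner loop, and the second $\proc{CubeSort}(A,n-1)$ then runs on an array whose prefix $A[1:n-2]$ is sorted but whose last element may be out of order. To close the loop, I would establish as an auxiliary claim (by a second induction on $m$) that whenever $A[1:m-1]$ is sorted, $\proc{CubeSort}(A,m)$ produces exactly the swap sequence of $\proc{InsertionSort}$'s inner loop inserting $A[m]$: the inner recursive call on an already sorted subarray makes no swaps by (c), the comparison either terminates the chain or triggers a swap that feeds recursively into the same situation one index lower.

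Finally, for $\proc{ExpoSort}(A,n)$ the decomposition is the same except that the second recursive call is issued unconditionally. In the no-swap branch, $A[1:n-1]$ is sorted after the first call, so by part (c) of the induction hypothesis the second call makes no swaps, preserving agreement with $\proc{InsertionSort}$. In the swap branch, the state entering the second call is identical to the one analysed for $\proc{CubeSort}$, so the auxiliary claim applies verbatim. The main obstacle is really just the bookkeeping of the ``almost sorted'' invariant and making sure that properties (a)–(c) are all carried together through the induction; once property (c) is in hand, the equivalence of the two algorithms' swap sequences with $\proc{InsertionSort}$'s is a matter of matching cases.
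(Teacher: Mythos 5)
Your proof is correct, and it rests on the same decisive observation as the paper's---that the extra, unconditional second recursive call of $\proc{ExpoSort}$ in the no-swap branch runs on an already sorted array and therefore contributes no swaps (your property (c))---but the organization and the second half of the argument differ. The paper factors the claim into two separate inductions: first that $\proc{ExpoSort}$ and $\proc{CubeSort}$ produce identical swap sequences, and only then that $\proc{CubeSort}$ matches $\proc{InsertionSort}$; for the latter it uses a short uniqueness argument (both algorithms swap only adjacent elements and never increase the number of inversions, so once the prefix is sorted there is only one possible swap sequence that carries the last element to its place). You instead run a single induction carrying the strengthened hypothesis (a)--(c) for both algorithms simultaneously against $\proc{InsertionSort}$ directly, and you replace the uniqueness argument with an explicit auxiliary induction on the position $m$, matching $\proc{CubeSort}$ on an array with sorted prefix against $\proc{InsertionSort}$'s inner loop swap for swap. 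Your route is longer and more mechanical, but it is more self-contained: the paper's uniqueness step asks the reader to accept that ``never increasing inversions'' pins down the swap sequence, whereas your step-by-step matching verifies it. The paper's decomposition has the virtue of isolating the one structural difference between the two recursive algorithms before ever mentioning $\proc{InsertionSort}$, which makes the first half very short. Either way the proof goes through; there is no gap in yours.
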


\begin{proof}
We first observe that $\proc{ExpoSort}$ and $\proc{CubeSort}$ make the same swap sequences by induction on $n$.
The base case $n=1$ is trivial, so suppose inductively that the claim holds for arrays of size at most $n-1$ and consider what happens when the algorithms are applied to an array $A$ of size $n$.
The algorithms both make a recursive call on the array $A[1:n-1]$ at line 2, which, by the induction hypothesis, make the same swap sequence.
The algorithms either both swap or don't swap $A[n-1]$ and $A[n]$ at line 4.
If they swap, they will both make the recursive call at line 5, which by the induction hypothesis produce the same swap sequence.
If they don't, the array is already sorted, and then the recursive call at line 5 of $\proc{ExpoSort}$ will not make any swaps.
In either case, the resulting swap sequences are the same.

We now observe that $\proc{CubeSort}$ makes the same swap sequence as $\proc{InsertionSort}$.
We again proceed by induction on $n$.
The base case $n=1$ is trivial.
Suppose that the claim is true for arrays of size $n-1$ and consider how $\proc{CubeSort}$ sorts an array $A$ of size $n$.
Using the induction hypothesis, $\proc{CubeSort}$ starts by sorting the first $n-1$ numbers with the same swaps as when $\proc{InsertionSort}$ sorts an array of $n-1$ numbers.
This is also how $\proc{InsertionSort}$ starts sorting $n$ numbers.
We thus need to verify that $\proc{CubeSort}$ proceeds by swapping the last element $x=A[n]$ with the elements to the left of it, until it reaches its designated position.
This is clear since $\proc{CubeSort}$ and $\proc{InsertionSort}$ both only swap neighbouring elements, and then there is a unique way to swap $x$ to the right place (without increasing the number of inversions, which none of the algorithms do).
\end{proof}

\subsection*{Acknowledgements}

I thank Thomas Dybdahl Ahle and Viktor Fredslund-Hansen for their valuable discussions on the serious matters presented in this paper.
I am also grateful to my students, who inspired me to explore strange algorithms for use in teaching and exams.

\printbibliography

\end{document}